\newtheorem{theorem}{Theorem}
\newtheorem{lemma}{Lemma}
\theoremstyle{remark}
\newtheorem{remark}{Remark}
\newcommand{\NN}{\mathbb{N}}
\newcommand{\abs}[1]{\vert #1 \vert}
\newcommand{\uu}{{\mathbf u}}
\newcommand{\ww}{{\mathbf w}}
\newcommand{\sv}{{\mathbf v}}
\newcommand{\lex}{\triangleleft}
\begin{document}
\title{Words with unbounded periodicity complexity}
\author{\v St\v ep\' an Holub}
\address{Department of Algebra, Charles University, Sokolovsk\'a 83, 175 86 Praha, Czech Republic}
\email{holub@karlin.mff.cuni.cz}
\subjclass{68R15}
\keywords{periodicity complexity, combinatorics on words}
\thanks{Supported by the Czech Science Foundation grant number 13-01832S}

\begin{abstract}
If an infinite non-periodic word is uniformly recurrent or is of bounded repetition, then the limit of its periodicity complexity is infinity. Moreover, there are uniformly recurrent words with the periodicity complexity arbitrarily high at infinitely many positions. 
 \end{abstract}
\maketitle

\section{Introduction}
In \cite{antonio}, a new complexity function of infinite words, called \emph{periodicity complexity}, was introduced. It  gives, for any position in the word, the average value of the local periods up to that position. The authors construct an infinite word for which the periodicity complexity is bounded. Since the word is not uniformly recurrent, they ask whether there exist non-periodic uniformly recurrent words having bounded periodicity complexity (see Remark 3.8. in \cite{antonio}). In Section \ref{alphasec}, we define words with special lexicographic properties. In Section  \ref{factorsec} we use those words to define a factorization of any non-periodic uniformly recurrent word. This factorization allows to show, in Section \ref{recsec},  that the answer to the above question is negative. Moreover, in Section \ref{repsec}, we show that any word with bounded repetition has unbounded periodicity complexity too.

The authors of \cite{antonio} also prove that the periodicity complexity can exceed any fixed function infinitely many times. Again, the witness word is not uniformly recurrent, and a problem is left open whether the same is true for uniformly recurrent words (see Remark 3.8. in \cite{antonio}). In Section \ref{function} we construct a uniformly recurrent word showing that the answer is positive. The method can be seen as a generalized Toeplitz construction.

\section{Basic concepts}
We first recall basic definitions and concepts. Let $w=a_1a_2a_3\cdots$, where $a_i$ are letters, be a finite or infinite word over some alphabet $\Sigma$. A \emph{position} in the word is any integer $1\leq i\leq |w|$ (we have $|w|=\infty$ if $w$ is infinite). The position $i$ can be understood as the border between $a_{i}$ and $a_{i+1}$, but we will rather identify it with the pair $(u,v)$ of words such that $w=uv$ and $|u|=i$. Note that, unlike \cite{antonio}, we consider also the position $|w|$ for a finite word, which does not lie between two letters and corresponds to $(w,\varepsilon)$, where $\varepsilon$ denotes the empty word.

By \emph{proper prefix} of $w$ we mean any prefix, including the empty one, that is strictly shorter than $w$.
The \emph{period} of $w$, denoted by $p(w)$, is the least integer $p$ such that $a_i=a_{i+p}$ holds for all $i\geq 1$ satisfying $i+p\leq |w|$. A finite word $w$ is called \emph{unbordered} if $p(w)=|w|$. Otherwise, $w$ is called \emph{bordered}, and any proper nonempty prefix of $w$ that is also a suffix of $w$ is called its \emph{border}. It is important and easy to see that the shortest border of a bordered word is itself unbordered.

If 
\[\sup\{e : \text{$v^e$ is a factor of $u$ for some nonempty word $v$}\}\]
for some infinite word $u$ is finite, we say that $u$ is of \emph{bounded repetition}.

A word is \emph{primitive} if it is not a power of a shorter word. A primitive word $w$ is a \emph{Lyndon word} if it is lexicographically minimal within the set of all words conjugate with it. That is, $w\lex vu$ for any factorization $w=uv$, where $\lex$ is a lexicographic order. It is well known that Lyndon words are unbordered.

A \emph{repetition word} at the position $(u,v)$ is any nonempty word $r$ that is suffix comparable with $u$ and prefix comparable with $v$ (any word being prefix comparable with the empty word). 
The \emph{local period} of $w$, denoted by $p_w(i)$, is defined by
\[p_w(i)=\min\{|r| : \text{$r$ is a repetition word at the position $i$}\}.\]
Note that an infinite word may have positions $(u,v)$ with no repetition word. This happen if and only if there is no repetition word of length at most $|u|$ and $u$ is not a factor of $v$. Then the corresponding local period is $\infty$ (in accordance with the usual definition of $\min \emptyset$). Note also that $p_w(|w|)=1$ for any finite $w$.

The most important result concerning local periods is the Critical Factorization Theorem, which states that for any finite word there is a position $i$ such that $p_w(i)=p(w)$.  Such a position is called \emph{critical}. For a proof of the Critical Factorization Tehorem using lexicographic orderings, see \cite{cft}.

Let $w=uvz$ be a factorization of $w$. We will say that the position $1\leq i\leq |v|$ of $v$ \emph{corresponds} to the position $|u|+i$ of $w$. We also say that the position $i$ of $w$ lies in $v$ if $|u|<i\leq |uv|$. This is rather informal since a particular occurrence of $v$ in $w$ is implicitly understood, but it will be always clear from the context.

Note that 
\begin{align}\label{factor}
	p_v(i)\leq p_{uvz}(|u|+i), \quad 1\leq i \leq |v|.
\end{align}
 Informally, the local period of a word at a position is at least the local period of its factor at the corresponding position.

We say that $j\geq 0$ is an \emph{occurrence} of a word $z$ in $w$ if $w=uv$, $|u|=j$ and $z$ is a prefix of $v$. If $j$ and $j'$ are two consecutive occurrences of $z$ in $w=a_1a_2a_3\cdots$, then $a_{j+1}a_{j+2}\cdots a_{j'}$ is called a \emph{return word to $z$} in $w$. Return words were introduced and studied in \cite{durand}.

An infinite word $w$ is called \emph{recurrent} if any factor of $w$ has an infinite number of occurrences in $w$. It is called \emph{uniformly recurrent} if for any factor $z$, the length of all return words to $z$ in $w$ is bounded. We call the length of the longest return word to $z$ in $w$ the \emph{maximal return time} of $z$ in $w$. Note that if an infinite word is recurrent, then its local period is finite at all positions.

The property studied in this paper is the \emph{periodicity complexity} of $w$ which is the function $h_w$ defined on positions of $w$ by
\[h_w(i)=\frac 1i\sum_{j=1}^ip_w(j).\]
The notion was introduced (for infinite words) in \cite{antonio}. If $w$ is infinite, then it is reasonable to suppose that $w$ is recurrent, which guarantees that the range of $h_w$ are positive rationals (no infinity has to be dealt with). However, we can also put $h_w(i)=\infty$ if $p_w(j)=\infty$ for some $1\leq j\leq i$.
For a finite $w$, denote \[h(w):=h_w(\abs{w}).\] Note the following useful inequality, which follows from \eqref{factor}:
\begin{align}\label{hcka}
	h(uv)\geq \frac{|u|\cdot h(u)+ |v|\cdot h(v)}{|uv|}.
\end{align}

\section{Lexicographically minimal return words}\label{alphasec}
Let $\uu$ be an infinite non-periodic uniformly recurrent word. Such a word has the following property.
\begin{lemma}\label{exp}
Let $v$ be a factor of $\uu$. Then the set of integers $e$ such that $v^e$ is a factor of $\uu$ is finite.
\end{lemma}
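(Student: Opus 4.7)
The plan is to argue by contradiction. Assume that the set of exponents is infinite, that is, for every $N$ there exists $e\geq N$ such that $v^e$ is a factor of $\uu$. I will derive a period on $\uu$, contradicting that $\uu$ is non-periodic.

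The first step exploits uniform recurrence. Fix any factor $z$ of $\uu$ of length exactly $2|v|$; such a factor exists because $\uu$ is infinite. By uniform recurrence, the maximal return time $R$ of $z$ in $\uu$ is finite, and consequently every factor of $\uu$ of length at least $R+|z|$ contains $z$ as a factor. Now pick $e$ so large that $e|v|\geq R+|z|$; then $v^e$ is a factor of $\uu$ that must contain $z$, so $z$ is a factor of $v^e$, and hence of the right-infinite word $v^\omega=vvv\cdots$.

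The second step turns this into a period. Any factor of $v^\omega$ of length at least $|v|$ inherits the period $|v|$ from $v^\omega$. Since the chosen $z$ has length $2|v|\geq |v|$, this says: every factor of $\uu$ of length $2|v|$ has period $|v|$. But this local information suffices to conclude that $\uu$ itself has period $|v|$: for each position $i$, applying the conclusion to the factor $a_i a_{i+1}\cdots a_{i+2|v|-1}$ yields $a_i=a_{i+|v|}$. Thus $\uu$ is periodic, contradicting the standing assumption on $\uu$.

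The argument has no real obstacle; the only point that needs a bit of care is the translation between the definition of uniform recurrence via return words and the more convenient statement that every sufficiently long factor contains a given factor $z$. That translation is immediate from the definition: if consecutive occurrences of $z$ are at distance at most $R$, then any factor of $\uu$ of length $R+|z|$ must overlap a full occurrence of $z$.
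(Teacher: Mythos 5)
Your proof is correct and rests on the same mechanism as the paper's: uniform recurrence forces any sufficiently long factor of $\uu$, in particular a long power $v^e$, to contain a prescribed factor, and this clashes with non-periodicity. The paper argues directly (it takes a factor $w$ with period greater than $|v|$ and bounds $|v^e|<m+|w|$), whereas you argue contrapositively (unbounded exponents would give every length-$2|v|$ factor the period $|v|$, making $\uu$ periodic); the difference is purely organizational.
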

\begin{proof}
Since $\uu$ is non-periodic, it contains a word $w$ with the period greater than $|v|$. Let $m$ be the maximal return time of $w$ in $\uu$. Then any word of length $m+|w|$ contains $w$ as a factor. This implies that $v^e$, which has the period smaller than $w$, has to be shorter than $m+|w|$.  
\end{proof}

 We define recursively an infinite sequence of words $\alpha_k$, $k\geq 1$. Fix a lexicographic order $\lex$ with the least letter $a$, and let $\alpha_1=a$. For $k> 1$, let $e_{k}$ be the largest integer such that $\alpha_{k}^{e_{k}}$ is a factor of $\uu$. The definition of $e_k$ is correct by Lemma \ref{exp}.  Now $\alpha_{k+1}$ is defined as the lexicographically minimal return word to $\alpha_{k}^{e_{k}}$ in $\uu$. 

\begin{lemma}\label{unbordered}
For each $k\geq 1$, the word $\alpha_k$ is the lexicographically minimal factor of $\uu$ of length $\abs{\alpha_k}$ and
each return word to $\alpha_k^{e_k}$ is Lyndon. In particular, the word $\alpha_k$ is unbordered for each $k\geq 1$.
\end{lemma}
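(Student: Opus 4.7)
The plan is to prove the three claims simultaneously by induction on $k$; the base case $k=1$ is immediate, since $\alpha_1=a$ is chosen to be the least letter and is trivially unbordered, and its Lyndon assertion follows from the general argument in the inductive step below.

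A preliminary observation used throughout: assuming $\alpha_k$ is unbordered (hence primitive), consecutive occurrences of $z:=\alpha_k^{e_k}$ in $\uu$ lie strictly more than $|z|$ apart, so each return word $r$ to $z$ satisfies $|r|>|z|$ and begins with $z$. Indeed, two occurrences at distance $d\leq|z|$ would yield a factor of $\uu$ with periods $d$ and $|\alpha_k|$; by Fine--Wilf and the primitivity of $\alpha_k$, $|\alpha_k|$ divides $d$, so $\alpha_k^{d/|\alpha_k|+e_k}$ occurs as a factor of $\uu$, contradicting the maximality of $e_k$.

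Granting all three claims at step $k$, one first proves lex-minimality of $\alpha_{k+1}$. If $\gamma$ is a factor of $\uu$ of length $|\alpha_{k+1}|$ with $\gamma\lex\alpha_{k+1}$, then each of the first $e_k$ length-$|\alpha_k|$ blocks of $\gamma$ is a factor of $\uu$ and so is not lex-smaller than $\alpha_k$ by the inductive hypothesis; since $\alpha_{k+1}$ starts with $\alpha_k^{e_k}$, any strict block-inequality would give $\gamma$ lex-larger than $\alpha_{k+1}$. Thus $\gamma$ itself begins with $z$, so its position in $\uu$ initiates an occurrence of $z$. The return word $r$ at that occurrence either has $\gamma$ as a prefix (when $|r|\geq|\gamma|$) or is a proper prefix of $\gamma$ (when $|r|<|\gamma|$); in both cases $r\lex\alpha_{k+1}$, contradicting the choice of $\alpha_{k+1}$ as lex-minimal return word.

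For the Lyndon claim, let $r$ be any return word to $z$ and consider a proper rotation $r^{(i)}$, $1\leq i\leq|r|-1$. First, $r^{(i)}$ cannot begin with $z$: in either the non-wrap case $i\leq|r|-|z|$ or the wrap case $i>|r|-|z|$, a hypothetical $z$-prefix of $r^{(i)}$ translates to an occurrence of $z$ at position $j+i$ of $\uu$ (where $j$ denotes the position of $r$ in $\uu$), using in the wrap case that $r$ is followed in $\uu$ by another $z$; but this occurrence lies strictly between the two consecutive occurrences of $z$ at $j$ and $j+|r|$, a contradiction. Second, each of the first $e_k$ length-$|\alpha_k|$ blocks of $r^{(i)}$ is a substring of the factor $rz$ of $\uu$---a direct position count in the $rr$-representation, relying on $|r|>|z|$---and hence is not lex-smaller than $\alpha_k$ by the inductive hypothesis. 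Since $r$ opens with $\alpha_k^{e_k}$ while $r^{(i)}$ cannot, the first block of $r^{(i)}$ differing from $\alpha_k$ must strictly exceed it, giving $r\lex r^{(i)}$. Applied to $\alpha_{k+1}$, this shows $\alpha_{k+1}$ is Lyndon, in particular unbordered, closing the induction. The main obstacle is the wrap-around analysis in the Lyndon step: one must identify the hypothetical wrapped $z$-prefix of $r^{(i)}$ with a genuine occurrence of $z$ in $\uu$, which is done using that $r$ is followed by $z$ in $\uu$.
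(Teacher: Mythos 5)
Your main argument is correct, and its core is genuinely different from the paper's. The paper first proves that a return word $w$ to $\alpha_k^{e_k}$ is \emph{unbordered} by a descent through the hierarchy: it takes the shortest border $r$ of $w$, lets $j<k$ be maximal with $\alpha_j$ a prefix of $r$, and plays the maximality of $e_j$ against the lexicographic minimality of $\alpha_{j+1}$; only afterwards does it upgrade unbordered to Lyndon via the conjugate argument (if $vu\lex uv=w$, then $v$ must be a prefix of $\alpha_k^{e_k}$, hence a border). You instead prove Lyndonness in one shot at level $k$: no proper rotation $r^{(i)}$ of a return word $r$ can begin with $z=\alpha_k^{e_k}$ (an occurrence of $z$ would fall strictly between two consecutive ones, the wrap-around case being resolved by the copy of $z$ following $r$), while each of the first $e_k$ blocks of $r^{(i)}$ of length $\abs{\alpha_k}$ is a factor of the word $rz$, hence of $\uu$, and so is not lexicographically below $\alpha_k$; thus the first block of $r^{(i)}$ that differs from $\alpha_k$ is strictly larger, giving $r\lex r^{(i)}$ strictly. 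This needs only the level-$k$ hypotheses (lexicographic minimality and unborderedness of $\alpha_k$, maximality of $e_k$) and dispenses with the multi-level border analysis; what it costs is the careful wrap-around bookkeeping, which you handle correctly (the identification of the overflowing blocks with factors of $rz$ does require that $r$ begins with $z$, which you have). Your lex-minimality step for $\alpha_{k+1}$ is essentially the paper's, and the induction bookkeeping is coherent.

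One step is misjustified, although its conclusion is true: in the preliminary observation you derive the fact that occurrences of $z$ are more than $\abs{z}$ apart from ``Fine--Wilf and the primitivity of $\alpha_k$''. Fine--Wilf requires the length hypothesis $\abs{z}\geq d+\abs{\alpha_k}-\gcd(d,\abs{\alpha_k})$, which fails when $d$ is close to $\abs{z}$, and primitivity alone does not yield the divisibility $\abs{\alpha_k}\mid d$: the primitive word $aba$ occurs at distance $2$ in $ababa$. The quick correct argument is the one the paper gestures at: since $\alpha_k$ is unbordered, an occurrence of $z$ at distance $d\leq\abs{z}$ with $\abs{\alpha_k}\nmid d$ would make two occurrences of $\alpha_k$ overlap properly, producing a nonempty border of $\alpha_k$; and if $\abs{\alpha_k}\mid d$, then $\alpha_k^{e_k+d/\abs{\alpha_k}}$ is a factor of $\uu$, contradicting the maximality of $e_k$. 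With this one-line repair your proof is complete.
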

\begin{proof}
Proceed by induction. Clearly, $\alpha_1=a$ is lexicographically minimal word of length one and unbordered. Let $k>1$. Since $\alpha_{k-1}$ is unbordered and $e_k$ is maximal, two distinct occurrences of $\alpha_{k-1}^{e_{k-1}}$ in $\uu$ do not overlap. Therefore $\alpha_{k-1}^{e_{k-1}}$ is a prefix of $\alpha_k$. From lexicographic minimality of $\alpha_{k-1}$, we deduce that $\alpha_{k-1}^{e_{k-1}}$ is a prefix of the lexicographically minimal factor of $\uu$ of length $\abs{\alpha_k}$. Such a word is therefore prefix comparable with some return word to $\alpha_{k-1}^{e_{k-1}}$. The definition of $\alpha_k$ now implies that $\alpha_k$ is the lexicographically minimal factor of $\uu$ of its length.

Let $w$ be a return word to $\alpha_k^{e_k}$, $k\geq 1$. We first show that $w$ is unbordered. 
Suppose that $r$ is the shortest border of $w$ and  let $j$ be the largest integer such that $\alpha_j$ is a prefix of $r$. Clearly, $1\leq j< k$. The maximality of $e_j$ implies that $r\neq \alpha_j$, since $r\alpha_k$, and therefore also $r\alpha_j^{e_j}$, is a factor of $\uu$.  Recall that $r$ is unbordered since it is a shortest border. Therefore $\alpha_j^{e_j}$ is a proper prefix of $r$ (otherwise $r$ is bordered), and $r$ is a proper prefix of $\alpha_{j+1}$ (otherwise $j$ is not maximal). This implies that $r$ is a return word to $\alpha_j^{e_j}$ that is lexicographically smaller than $\alpha_{j+1}$, a contradiction.

Let now $w=uv$ be such that $vu$ is the Lyndon conjugate of $w$ and suppose that both $u$ and $v$ are nonempty. Since $w$ is a return word, the word $\alpha_{k}^{e_k}$ does not occur in $v$. The lexicographic minimality of $\alpha_k$ and $vu\lex w$ implies that $v$ is a prefix of $\alpha_k^{e_k}$. Therefore $w$ is bordered, a contradiction.

The word $\alpha_k$, $k>1$, is unbordered since it is a return word to $\alpha_{k-1}^{e_{k-1}}$.
\end{proof}

\section{Unbordered factorizations}\label{factorsec}
For any factor, an infinite recurrent word admits a factorization defined by occurrences of that factor. Such a factorization was considered already in \cite{durand}. We will study factorizations of $\uu$ given by $\alpha_k^{e_k}$. 
For each $k\geq 1$, let
 \[\uu=w_{k,0}w_{k,1}w_{k,2}w_{k,3}\cdots,\]
where $w_{k,0}\alpha_k^{e_k}$ is the shortest prefix of $\uu$ containing $\alpha_k^{e_k}$, and $w_{k,j}$ is the $j$th return word to $\alpha_k^{e_k}$ in $\uu$, for $j\geq 1$. In other words, the integer $\abs{w_{k,0}w_{k,1}\cdots w_{k,j-1}}$ is the $j$th occurrence of $\alpha_k^{e_k}$ in $\uu$.

 By Lemma \ref{unbordered}, all words $w_{k,j}$, $k\geq 0$, $j\geq 1$, are unbordered. Moreover, the $k$th factorization is a refinement of the $(k+1)$th one by the definition of $\alpha_k$. In particular, for each $k<k'$ and each $j\geq 1$, there are numbers $s,t\geq 1$ such that $w_{k',j}=w_{k,s}w_{i,s+1}\cdots w_{i,s+t}$.
We have already seen in the proof of Lemma \ref{unbordered} that two distinct occurrences of $\alpha_k^{e_k}$ in $\uu$ do not overlap. Therefore $\alpha_k^{e_k}$ is a prefix of $w_{k,j}$, for each $k,j\geq 1$.

Denote \[h_k:=\inf_{j\geq 1}\{h(w_{k,j})\}.\]
The following lemma is the core of the proof of Theorem \ref{recurrent}.
\begin{lemma}\label{h}
The sequence $(h_k)$ is unbounded.
\end{lemma}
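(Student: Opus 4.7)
I would first check that $(h_k)$ is non-decreasing. The refinement observed just before the lemma says that each $w_{k+1,j}$ factors as $w_{k,s}w_{k,s+1}\cdots w_{k,s+t}$ for some $s\geq 1$, and iterating inequality \eqref{hcka} then gives
\[h(w_{k+1,j}) \;\geq\; \frac{\sum_{l=0}^{t}\,|w_{k,s+l}|\,h(w_{k,s+l})}{|w_{k+1,j}|} \;\geq\; h_k,\]
so taking the infimum over $j$ yields $h_{k+1}\geq h_k$.

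For the main claim, the plan is to argue by contradiction: assume $h_k\leq M$ for all $k$, and for each $k$ choose $j_k$ with $h(w_{k,j_k})\leq M+1$. Since two distinct occurrences of $\alpha_k^{e_k}$ do not overlap and $\alpha_k^{e_k}$ is a prefix of $\alpha_{k+1}$, we have $|\alpha_{k+1}|>e_k|\alpha_k|\geq|\alpha_k|$ (otherwise $\alpha_k^{2e_k}$ would be a factor of $\uu$, contradicting the maximality of $e_k$); hence $|w_{k,j_k}|\geq|\alpha_k^{e_k}|\to\infty$. Thus $\uu$ contains arbitrarily long factors with bounded average local period.

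The heart of the argument will turn this average bound into forbidden structure. Since $\sum_{i=1}^{|w_{k,j_k}|} p_{w_{k,j_k}}(i) \leq (M+1)|w_{k,j_k}|$, at least half of the positions of $w_{k,j_k}$ have local period at most $2(M+1)$. By a pigeonhole on the value of the local period (a member of $\{1,2,\ldots,2(M+1)\}$), some value $p\leq 2(M+1)$ is realised at a positive fraction of positions; each such position produces a window of period $p$ and length $2p$ inside $w_{k,j_k}$, and overlapping such windows combine via the Fine--Wilf theorem into a single factor with period at most $p$. As $k\to\infty$ this should yield a factor $v^e$ with $|v|\leq 2(M+1)$ and $e$ arbitrarily large; since the alphabet is finite, some fixed word $v$ then appears with arbitrarily high exponent in $\uu$, contradicting Lemma~\ref{exp}.

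The main obstacle will be the combinatorial step just sketched: positions of small local period could be too dispersed for their short windows to coalesce into a single long periodic run. I expect a double pigeonhole — first fixing the period value $p$, then selecting a chain of $p$-good positions at pairwise distance at most $2p$ — will suffice, possibly reinforced by exploiting the hierarchical structure that each $w_{k,j}$ begins with $\alpha_k^{e_k}$, which itself begins with $\alpha_{k-1}^{e_{k-1}}$, and so on, to propagate a growing period through the tower.
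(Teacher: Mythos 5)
Your preliminary step is fine: $h_{k+1}\geq h_k$ does follow from the refinement property and \eqref{hcka}, and Markov's inequality correctly gives that at least half the positions of $w_{k,j_k}$ have local period at most $2(M+1)$. The gap is the combinatorial heart of your plan: the claim that an arbitrarily long factor with $h\leq M+1$ must contain a high power of a word of \emph{bounded} length (at most $2(M+1)$). This is not merely the technical obstacle you acknowledge (dispersed windows failing to coalesce); the claim itself is false for general words, so no double pigeonhole can rescue it. Bounded average local period does force long periodic runs, but the periods of those runs may grow with the factor: consider words built as $x_{i+1}=x_i^{K_i}Y$, where $Y$ is a fixed word of length $2P$ with no period at most $P:=2(M+1)$, the base $x_{i_0}$ is longer than $2P$ with no period at most $P$, and $K_i$ grows fast (say $K_i\geq |x_i|$). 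Then every run of period at most $P$ is bounded in length (such a run can contain neither a full $Y$ nor a full $x_{i_0}$), so exponents of words of length at most $P$ stay bounded; yet $h(x_i)$ stays bounded, since positions inside the run $x_i^{K_i}$ have local period at most $|x_i|$ and typically much less, and the boundary corrections average out to $O(|x_i|/K_i)=O(1)$. This is exactly the structure of the bounded-periodicity-complexity word of \cite{antonio} described in Remark~\ref{remark:1} --- rather, in the paper's remark after Theorem 2: bounded $h$ yields a factorization $v_0u_0^{e_0}v_1u_1^{e_1}\cdots$ with unbounded exponents $e_i$ but with base words $u_i$ of \emph{unbounded} length. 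So from $h(w_{k,j_k})\leq M+1$ you can at best extract high powers of longer and longer words; finiteness of the alphabet then no longer pins down a single fixed $v$, and Lemma~\ref{exp} is not contradicted. Nothing in your sketch invokes uniform recurrence or the return-word structure at the point where this scenario would have to be excluded (such counterexample words are of course not uniformly recurrent, but that is precisely what your argument would need to use and does not).

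For comparison, the paper's proof avoids looking at positions of small local period altogether. It uses Lemma~\ref{unbordered}: each return word $w_{k',j}$ is unbordered, so by the Critical Factorization Theorem it has a position whose local period equals its full length $|w_{k',j}|$. Uniform recurrence enters quantitatively: choosing $k'$ so that the minimal block length at level $k'$ exceeds twice the maximal block length at level $k$, replacing the estimate at that single critical position (previously at most $|w_{k,\ell}|$) by $|w_{k',j}|$ raises the average by at least $1-|w_{k,\ell}|/|w_{k',j}|\geq \tfrac12$, giving $h_{k'}\geq h_k+\tfrac12$ and hence unboundedness. If you want to keep your outline, the missing idea is of this kind: produce, inside each long block, one position whose local period is comparable to the \emph{block length} (via unborderedness and the CFT), rather than many positions of small local period.
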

\begin{proof}
We will show that for each $k$ there is some $k'>k$ such that $h_{k'}\geq h_k+\frac 12$.

Since $\uu$ is uniformly recurrent, the maximal return time of $\alpha_k^{m_k}$ in $\uu$ $$m_k:=\max_{j\geq 1} \{\abs{w_{k,j}}\}$$ is finite for each $k\geq 0$. On the other hand, the sequence $(\mu_k)$ where $$\mu_k:=\min_{j\geq 1} \{\abs{w_{k,j}}\}$$ is strictly growing since $\alpha_k^{e_k}$ is a prefix of each $w_{k,j}$ and  $|\alpha_k|$ is growing. Therefore, for each $k$, there is some $k'$ such that 
\begin{align}\label{minmax}
	\mu_{k'}>2m_k.
\end{align}
 We claim that $h_{k'}\geq h_k+\frac 12$ as required.

Chose $j\geq 1$ and let 
\[
w_{k',j}=w_{k,s}w_{k,s+1}\cdots w_{k,s+t}.
\]
In order to obtain a lower bound for $h(w_{k',j})$, estimate the local period at each position in $w_{k',j}$ by the local period at the corresponding position of a factor $w_{k,s'}$, $s\leq s'\leq s+t$, with only one exception: the critical position of $w_{k',j}$. At that position (we chose one of them if there are many) we shall insist on the actual value, which is $\abs{w_{k',j}}$ since $w_{k',j}$ is unbordered. 

Let $\ell$ be such that the chosen critical position of $w_{k',j}$ lies in $w_{k,\ell}$. Then the local period of $w_{k,\ell}$ at that position, which is at most $\abs{w_{k,\ell}}$, is replaced by $\abs{w_{k',j}}$.  
By \eqref{hcka}, we obtain the following bound.
\begin{align*}
	h(w_{k',j})&\geq \frac{\sum_{i=0}^{t} \abs{w_{k,s+i}}\cdot h(w_{k,s+i})+\abs{w_{k',j}}-\abs{w_{k,\ell}}}{\abs{w_{k',j}}} \\
	&\geq \frac{\sum_{i=0}^{t} \abs{w_{k,s+i}}\cdot h_k}{\abs{w_{k',j}}}+1-\frac {\abs{w_{k,\ell}}}{\abs{w_{k',j}}}\\
		&\geq h_k+1-\frac {\abs{w_{k,\ell}}}{\abs{w_{k',j}}}\geq h_k+\frac 12,
\end{align*}
where the last inequality follows from \eqref{minmax}. This completes the proof.
\end{proof}

\section{Uniformly recurrent words}\label{recsec}
We can now prove the first main result.
\begin{theorem}\label{recurrent}
Let $\uu$ be an infinite uniformly recurrent non-periodic word. Then
\[\lim_{i\to \infty} h_\uu(i)=\infty.\]   
\end{theorem}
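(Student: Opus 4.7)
The plan is to derive the theorem directly from Lemma \ref{h}. For each fixed $k$, the factorization $\uu = w_{k,0}w_{k,1}w_{k,2}\cdots$ decomposes any sufficiently long prefix of $\uu$ into blocks whose $h$-value is bounded below by $h_k$, modulo boundary terms that become negligible in the limit. Combined with the unboundedness of $(h_k)$, this will force $h_\uu(i)\to\infty$.

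Fix $k$ and a position $i > |w_{k,0}|$. Write the prefix of $\uu$ of length $i$ as $w_{k,0}w_{k,1}\cdots w_{k,j}\,p$, where $p$ is a (possibly empty) proper prefix of $w_{k,j+1}$. Iterating inequality \eqref{hcka} over this concatenation gives
\[h_\uu(i)\;\geq\;\frac{|w_{k,0}|\,h(w_{k,0}) + \sum_{l=1}^{j}|w_{k,l}|\,h(w_{k,l}) + |p|\,h(p)}{i}.\]
By definition of $h_k$, each $h(w_{k,l})\geq h_k$ for $l\geq 1$, and trivially $h(\cdot)\geq 1$ since every local period is at least $1$. Hence the right-hand side is at least $h_k\cdot\frac{i-|w_{k,0}|-|p|}{i}$. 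The crucial observation is that $|w_{k,0}|$ is a constant depending only on $k$, and $|p|\leq m_k$ is bounded uniformly in $i$ by the maximal return time of $\alpha_k^{e_k}$, which is finite by uniform recurrence of $\uu$. Letting $i\to\infty$ therefore yields $\liminf_{i\to\infty} h_\uu(i)\geq h_k$.

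Since $k$ was arbitrary and Lemma \ref{h} guarantees that $h_k$ can be made arbitrarily large, the theorem follows. At this stage there is no real obstacle: the substantive work has already been carried out in Lemma \ref{h}, and the only thing to verify here is that the contributions of the boundary pieces $w_{k,0}$ and $p$ do not spoil the averaging, which they do not because both lengths are bounded in terms of the fixed parameter $k$ alone.
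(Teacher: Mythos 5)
Your proof is correct and follows essentially the same route as the paper: decompose the prefix of length $i$ along the $k$th factorization, bound the interior blocks below by $h_k$ via iterated use of \eqref{hcka}, note that the boundary pieces $w_{k,0}$ and the final partial block have length bounded independently of $i$, and invoke Lemma \ref{h}. The only difference is cosmetic: you phrase it as $\liminf_{i\to\infty} h_\uu(i)\geq h_k$ for every $k$, while the paper fixes $n$, picks $k$ with $h_k\geq n+1$, and gives an explicit threshold $i\geq 2nm$.
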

\begin{proof}
For given $n$, we want to find $i_n$ such that, for each $i\geq i_n$, we have $h_\uu(i)\geq n$. 

Let $k$ be such that $h_k\geq n+1$. 
Denote 
\[
m=\max_{j\geq 0}\{\abs{w_{k,j}}\}=\max\{m_k,|w_{k,0}|\}\,.
\] 
Let $u$ be the prefix of $\uu$ of lenght $i\geq 2nm$. The word $u$ can be factorized as
\[u=w_{k,0}w_{k,1}w_{k,2}\dots w_{k,d}u',\]
where $u'$ is a proper prefix of $w_{k,d+1}$. The sum of local periods at positions of $u$ lying either in $w_{k,0}$ or in $u'$ is at least $\abs{w_{0,k}u'}$, and the sum of local periods for positions lying in $w_{k,1}w_{k,2}\dots w_{k,d}$ is at least
\[h_k\cdot \abs{w_{k,1}w_{k,2}\dots w_{k,d}}\geq (n+1)(\abs{u}-\abs{w_{k,0}u'}).\]
Therefore
\begin{align*}
	h_\uu(i)&\geq h(u)\geq \frac{\abs{w_{k,0}u'}+(n+1)(\abs{u}-\abs{w_{k,0}u'})}{\abs{u}}=\\
  					&=n+1-n\cdot \frac {\abs{w_{k,0}u'}}{\abs{u}}> n+1 -n\cdot \frac{2m}{2nm}=n.
\end{align*}
The last inequality uses $|u|\geq 2nm$ and $\abs{w_{k,0}u'}<2m$.
\end{proof}

\section{Bounded repetition}\label{repsec}
We shall now extend our result to words with bounded repetition. Let  $\uu$ be of bounded repetition and let $e<\infty$ be the largest integer such that $v^e$ is a factor of $\uu$ for some $v$. For each $k\geq 0$, we define factorizations
\[\uu=z_{k,0}z_{k,1}z_{k,2}\cdots,\] 
where $|z_{k,j}|=2^{k}$ for each $j\geq 0$. Then 
\[z_{k+i,j}=z_{k,2^{i}j}z_{k,2^{i}j+1}z_{k,2^{i}j+2}\cdots z_{k,2^{i}j+2^i-1}. \]
Denote
\[b_k:=\inf_{j\geq 1}\{h(z_{k,j})\}.\]
We can prove an analogue to Lemma \ref{h}.
\begin{lemma}\label{b}
The sequence $(b_k)$ is unbounded.
\end{lemma}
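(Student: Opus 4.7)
The plan is to emulate the proof of Lemma \ref{h}, replacing the use of unborderedness of return words by a lower bound on periods of arbitrary factors that follows from bounded repetition. The goal is, for each $k \geq 0$, to produce some $k' > k$ with $b_{k'} \geq b_k + c$ for a constant $c = c(e) > 0$ independent of $k$; iterating this will give unboundedness of $(b_k)$.

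The ingredient I would first isolate is that any factor $w$ of $\uu$ of length $n$ with period $p$ satisfies $p > n/(e+1)$: the length-$p$ prefix $v$ of $w$ gives $v^{\lfloor n/p \rfloor}$ as a prefix, hence a factor, of $\uu$, so $\lfloor n/p \rfloor \leq e$. Combined with the Critical Factorization Theorem applied to $z_{k',j}$, this yields a position of $z_{k',j}$ whose local period is at least $2^{k'}/(e+1)$.

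Given $k$, I would choose $k'$ so large that $2^{k-k'} \leq \frac{1}{2(e+1)}$. For each $j \geq 1$, write $z_{k',j} = z_{k,s} z_{k,s+1} \cdots z_{k,s+t}$ with $s = 2^{k'-k} j \geq 1$ and $t = 2^{k'-k} - 1$, and repeat the estimate of Lemma \ref{h}: at every non-critical position of $z_{k',j}$, bound the local period via \eqref{factor} by that of the sub-block $z_{k,s+i}$ containing it, while at a chosen critical position (lying in some $z_{k,\ell}$) replace the original contribution $\leq |z_{k,\ell}| = 2^k$ with the new lower bound $2^{k'}/(e+1)$. Summing, dividing by $2^{k'}$, and using $h(z_{k,s+i}) \geq b_k$ yields
\begin{align*}
    h(z_{k',j}) \geq b_k + \frac{1}{e+1} - 2^{k-k'} \geq b_k + \frac{1}{2(e+1)}.
\end{align*}
Taking the infimum over $j$ delivers $b_{k'} \geq b_k + \frac{1}{2(e+1)}$, completing one step of the iteration.

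The main point of departure from Lemma \ref{h} is that unborderedness there gave the maximal possible bound (the full word length) at the critical position, while here bounded repetition yields only $2^{k'}/(e+1)$. This factor $1/(e+1)$ is the unavoidable cost; it still dominates the competing sub-block length $2^k$ once $k'-k$ is sufficiently large in terms of $e$, which is exactly the role of the choice $2^{k-k'} \leq \frac{1}{2(e+1)}$. Otherwise the argument is a direct transcription of the proof of Lemma \ref{h}.
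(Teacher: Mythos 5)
Your argument is correct, and it takes a recognizably similar skeleton to the paper's proof but differs in the key boosting step. The paper also compares $z_{k',j}$ with its level-$k$ blocks via \eqref{factor}/\eqref{hcka}, and also uses bounded repetition to force the period $p$ of $z_{k',j}$ to be large ($p\geq 2^{k'}/e$ after choosing $2^{k'}\geq e\cdot 2^{k+1}$); but instead of upgrading the estimate at a single critical position of $z_{k',j}$, it writes $z_{k',j}=v^sv'$ with $\abs{v}=p$ and claims a gain of at least $p-2^k$ at a chosen position inside \emph{each} of the $s$ occurrences of $v$, which (via $sp>\tfrac12\abs{z_{k',j}}$ and $p\geq 2^{k+1}$) yields the increment $\tfrac14$, independent of $e$. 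Your single-position version, with the slightly sharper elementary bound $p>2^{k'}/(e+1)$ from $\lfloor 2^{k'}/p\rfloor\leq e$, only yields the increment $\tfrac1{2(e+1)}$, but that is entirely sufficient: $e$ is a fixed constant of the word $\uu$, so iterating $b_{k'}\geq b_k+\tfrac1{2(e+1)}$ along a subsequence still gives unboundedness, which is all Theorem \ref{repetition} needs (the weaker increment only worsens the implicit rate at which $h_\uu$ tends to infinity). A further small advantage of your route is that the Critical Factorization Theorem guarantees the exact value $p$ at the one position you upgrade, so you never have to argue that the local period of $z_{k',j}$ stays at least $p$ at several translated positions, which is the more delicate point in the multi-occurrence replacement; you also correctly keep $j\geq 1$ so that all sub-blocks $z_{k,s+i}$ have index at least $1$, matching the definition of $b_k$ as an infimum over $j\geq 1$.
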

\begin{proof}
For given $k$, let $k'$ be such that 
\begin{align}\label{minmaxb}
	2^{k'}\geq e\cdot 2^{k+1}.
\end{align}
We show that $b_{k'}\geq b_k+\frac 14$.

Chose $j\geq 0$ and let $p$ be the period of $z_{k',j}$. Then 
\[\frac{2^{k'}}{p}\leq e,\]
whence
\begin{align}\label{p}
	p\geq \frac{2^{k'}}{e}\geq 2^{k+1}.
\end{align}
Let \[z_{k',j}=v^sv',\]
where $\abs{v}=p$ and $v'$ is a proper prefix of $v$. 
 
We bound the value $h(z_{k',j})$ from below similarly as in the proof of Lemma \ref{h}. Estimate the local period at each position of $z_{k',j}$ by the local period at the corresponding position of a factor $z_{k,j'}$, and then, for a chosen critical position of $v$ and each of $s$ occurrences of $v$, replace the previous value with $p$. Such a replacement increases the value by at least $p-2^k$ for each occurrence of $v$.
This yields the following bound.
\begin{align*}
	h(z_{k',j})&\geq b_k + \frac{s(p-2^k)}{2^{k'}}.
\end{align*}
Since $p$ is the period of $z_{k',j}$, it is easy to see that \[sp>\frac12 |z_{k',j}|.\]
From \eqref{p}, we deduce
\[\frac{s\cdot 2^k}{2^{k'}}\leq \frac 12 \cdot \frac{sp}{2^{k'}}.\]
Altogether, we have the desired inequality
\begin{align*}
	h(z_{k',j})&\geq b_k + \frac{s(p-2^k)}{2^{k'}}\geq b_k + \frac 12 \cdot \frac{sp}{2^{k'}}> b_k+\frac 14.
\end{align*}
This completes the proof.
\end{proof}
The following theorem is now easy to prove.
\begin{theorem}\label{repetition}
Let $\uu$ be an infinite word with bounded repetition. Then
\[\lim_{i\to \infty} h_\uu(i)=\infty.\]   
\end{theorem}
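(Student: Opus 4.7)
The plan is to mimic the proof of Theorem \ref{recurrent}, using the factorization $\uu = z_{k,0} z_{k,1} z_{k,2} \cdots$ in place of the unbordered factorizations and invoking Lemma \ref{b} instead of Lemma \ref{h}. Given $n$, I would use Lemma \ref{b} to choose $k$ with $b_k \geq n+1$, set $m = 2^k$, and prove that $h_\uu(i) \geq n$ for every $i \geq 2nm$.

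For such $i$, let $u$ be the prefix of $\uu$ of length $i$ and factor it as $u = z_{k,0} z_{k,1} \cdots z_{k,d} u'$, where $u'$ is a proper prefix of $z_{k,d+1}$. Since $\abs{z_{k,j}} = 2^k = m$ for every $j$, the combined initial/tail contribution satisfies $\abs{z_{k,0} u'} < 2m$. Applying \eqref{factor} to $u$ as a prefix of $\uu$ gives $h_\uu(i) \geq h(u)$, so it suffices to bound $h(u)$ from below. The sum of local periods of $u$ at positions lying in $z_{k,0}$ or $u'$ is trivially at least $\abs{z_{k,0} u'}$, while for each $1 \leq j \leq d$ the sum of local periods of $u$ at positions inside $z_{k,j}$ is, by \eqref{factor} and the definition of $b_k$, at least $m \cdot h(z_{k,j}) \geq m b_k \geq (n+1)m$.

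Summing these contributions yields a total of at least $\abs{z_{k,0} u'} + (n+1)(\abs{u} - \abs{z_{k,0} u'})$. Dividing by $\abs{u}$ and using $\abs{u} \geq 2nm$ together with $\abs{z_{k,0} u'} < 2m$, this gives
\[ h_\uu(i) \geq h(u) \geq n + 1 - n \cdot \frac{\abs{z_{k,0} u'}}{\abs{u}} > n + 1 - n \cdot \frac{2m}{2nm} = n, \]
as required. I do not foresee any real obstacle: the argument is essentially a transcription of the proof of Theorem \ref{recurrent}, and is in fact slightly simpler because all blocks $z_{k,j}$ share the common length $2^k$, removing the need to treat the first block separately from the later ones.
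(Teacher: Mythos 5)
Your argument is correct: Lemma \ref{b} supplies $k$ with $b_k\geq n+1$, inequality \eqref{factor} lets you pass from $\uu$ to its prefix $u$ and from $u$ to the blocks $z_{k,j}$, the positions in $z_{k,0}$ and in the incomplete tail $u'$ contribute at least $1$ each, and the final computation with $\abs{z_{k,0}u'}<2m$ and $\abs{u}\geq 2nm$ is right. It is, however, not quite the route the paper takes: you transplant the proof of Theorem \ref{recurrent} verbatim, cutting the prefix into many level-$k$ blocks of common length $m=2^k$ and paying for the two ``wasted'' pieces by requiring $i\geq 2nm$. The paper instead exploits the dyadic nesting of the factorizations: it takes $k$ with $b_k\geq 2n$ and, for any $i\geq 2^k$, the maximal $k'\geq k$ with $2^{k'}\leq i$, writes $u=z_{k',0}u'$ with $\abs{u'}<\abs{z_{k',0}}$, and concludes in one line that $h_\uu(i)\geq \frac{b_k\abs{z_{k',0}}+\abs{u'}}{\abs{u}}\geq b_k/2\geq n$. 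The paper's version is shorter and needs no $n$-dependent length threshold, but it silently uses $h(z_{k',0})\geq b_k$, which involves the block $z_{k,0}$ even though the infimum defining $b_k$ is taken over $j\geq 1$; your version is insensitive to that point, since you bound the positions lying in $z_{k,0}$ trivially. In substance both proofs rest on the same ingredients (Lemma \ref{b} together with \eqref{factor} and \eqref{hcka}), so the difference is only in the bookkeeping, with yours slightly longer but marginally more careful about the initial block.
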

\begin{proof}
For a given $n$, let $k$ be such that $b_k\geq 2n$. Consider the prefix $u$ of $\uu$ of length $i\geq 2^{k}$. Then
\[u=z_{k',0}u',\]
where $k'\geq k$ and $u'$ is a proper prefix of $z_{k',1}$. We have
\begin{align*}
h_\uu(i)\geq h(u)\geq \frac{b_{k}\cdot |z_{k',0}|+|u'|}{|u|}\geq \frac{b_k}{2}\geq n.
\end{align*}
This completes the proof.
\end{proof}

\begin{remark}
	Theorem \ref{recurrent} and Theorem \ref{repetition} show that the construction of an infinite word with a bounded periodicity complexity as it is given in \cite{antonio} is the only possible in the following sense. Let $\uu$ be an infinite word that is not ultimately periodic and its periodicity complexity is bounded by $n$. Then $\uu$ can be factorized as
	\[\uu=v_0u_0^{e_0}v_1u_1^{e_1}v_2u_2^{e_2}\cdots,\] 
	where $h(u_i)<n$ for each $i\geq 0$, and the sequence of exponents $(e_i)$ is unbounded. 
\end{remark}

\section{High periodicity complexities}\label{function}
This section solves the problem left open in \cite{antonio}, Remark 3.23, by proving the following improvement of Theorem 3.20, ibidem. 
\begin{theorem}
For each function $f:\NN\to \NN$ there is a uniformly recurrent word $\uu$ such that $h_\uu(d)>f(d)$ for infinitely many integers $d$.
\end{theorem}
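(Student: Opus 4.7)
The plan is to build $\uu$ by a generalized Toeplitz construction. Fix a rapidly growing sequence $N_1 < N_2 < \cdots$ with $N_{k+1} > f(N_1 N_2 \cdots N_k)$, and for each $k$ choose two fresh letters $x_k \neq y_k$, distinct from all letters used at earlier stages. Define the template $T_k := x_k\, y_k^{N_k - 2}\, ?$ of length $N_k$. Set $\uu^{(1)} := T_1^\infty$; assuming the holes remaining in $\uu^{(k)}$ lie exactly at positions $j \cdot d_k$ with $j \geq 1$ and $d_k := N_1 N_2 \cdots N_k$, obtain $\uu^{(k+1)}$ by placing the $j$-th letter of $T_{k+1}^\infty$ at the $j$-th remaining hole. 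Every position is eventually filled, so $\uu := \lim_k \uu^{(k)}$ is a well-defined infinite word. Uniform recurrence follows from the standard Toeplitz argument: each factor of $\uu$ is determined at some finite stage $k$ and thereafter reappears in each residue class modulo $d_k$, yielding a bounded maximal return time.

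For the complexity lower bound, I would focus on the positions $d_k$. A straightforward induction shows that $\uu^{(k)}$, viewed as a partial word, has period exactly $d_k$: the fresh letters $x_j, y_j$ introduced at each stage $j \leq k$ occur only at level-$j$ positions, which rules out any smaller period. By construction $\uu[d_k] = x_{k+1}$, and more generally $x_{k+1}$ occurs in $\uu$ precisely at the positions of the form $(m N_{k+1} + 1) d_k$ for $m \geq 0$, since these are exactly the level-$(k+1)$ positions whose stage-$(k+1)$ fill is $T_{k+1}[1]$. Now consider any repetition word $r$ at position $d_k$ with $|r| = L$. The matching of the letter at position $d_k + L$ with $\uu[d_k] = x_{k+1}$ forces $d_k + L$ to carry $x_{k+1}$; the smallest positive $L$ with that property is $L = N_{k+1} d_k = d_{k+1}$. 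For this $L$, the remaining matching conditions on positions $1 \leq m \leq d_k - 1$ hold automatically by stage-$k$ periodicity (since $d_k \mid L$). Every smaller $L$ fails either because $d_k + L$ does not carry $x_{k+1}$, or (in the case $L > d_k$ with $d_k \nmid L$) because the stage-$k$ pattern, having period exactly $d_k$, cannot match under a nontrivial rotation. Consequently $p_\uu(d_k) = d_{k+1}$, and
\[
h_\uu(d_k) \geq \frac{p_\uu(d_k)}{d_k} = N_{k+1} > f(d_k).
\]

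The main obstacle is this last combinatorial analysis: one must verify in full generality that every candidate repetition word shorter than $d_{k+1}$ is blocked, considering both the rotation constraint on stage-$k$ positions and the explicit location of the marker letter $x_{k+1}$. The use of fresh letter pairs at every stage keeps this tracking transparent, at the cost of an infinite alphabet; a finite-alphabet variant would require encoding the level markers within a bounded number of symbols, but the core rigidity argument is the same.
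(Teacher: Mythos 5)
Your scheme is the same generalized Toeplitz construction that underlies the paper's word (the paper itself remarks that its $\uu$ is the Toeplitz word built from the templates $ab^{n_i}?$), and the part you single out as ``the main obstacle'' is in fact sound and even easier than you suggest: for a repetition word $r$ of length $L$ at the position $d_k$, both cases of the definition (for $L\le d_k$ the length-$L$ suffix of the prefix must equal the length-$L$ prefix of the tail; for $L>d_k$ the whole length-$d_k$ prefix must be a suffix of $r$) force the single equality between the letters at positions $d_k$ and $d_k+L$, and since $x_{k+1}$ occurs exactly at the positions $(mN_{k+1}+1)d_k$ this alone gives $d_{k+1}\mid L$; no separate ``rotation'' case is needed, and $L=d_{k+1}$ is realized because all stage-$\le k$ fills are periodic with period dividing $d_{k+1}$. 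So, granting your setting, $p_\uu(d_k)=d_{k+1}$, uniform recurrence, and $h_\uu(d_k)\ge p_\uu(d_k)/d_k=N_{k+1}>f(d_k)$ are all correct.

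The genuine gap is the alphabet. Taking two fresh letters at every level makes your word live over an infinite alphabet, while the theorem is intended (and proved in the paper) for an ordinary infinite word over a fixed finite alphabet: it answers the question of \cite{antonio}, whose setting is finite alphabets, and the paper's witness is binary. Your closing claim that a finite-alphabet variant ``would require encoding the level markers within a bounded number of symbols, but the core rigidity argument is the same'' is exactly the point that cannot be waved away, because the fresh letters are what make your rigidity argument one line long. Over $\{a,b\}$ the level-$k$ marker is not a letter that provably occurs only at level-$k$ positions but a word (in the paper, $u_{k-1}a$ inside $u_k=u_{k-1}a(u_{k-1}b)^{n_k}u_{k-1}$), and one must first prove a synchronization lemma excluding unsynchronized occurrences of $u_i$ -- the paper shows every occurrence of $u_i$ in $\uu$ is a multiple of $|u_i|+1$ -- before short repetition words at $d_k$ can be ruled out. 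That lemma is the real content of the paper's proof and has no counterpart in your argument; without it (or without explicitly weakening the statement to allow an unbounded alphabet) the proof is incomplete.
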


\begin{proof}
	Let $n_i$, $i\geq 1$, be a growing sequence of positive integers with $n_1\geq 2$. Then we define inductively a sequence $u_i$, $i\geq 0$, of words  by
	\begin{align*}
	u_0&=\varepsilon,\\
	u_{i}&=u_{i-1}a(u_{i-1}b)^{n_{i}}u_{i-1},\quad i\geq 1,
	\end{align*} 
	and put 
	\[\uu:= \lim_{i\to \infty}u_i.\]
We claim that 
 $\uu$ is uniformly recurrent, and for each $j\geq 1$, we have
\begin{align}\label{big}
p_\uu(d_j)=(n_j+1)(|u_{j-1}|+1),
\end{align}
 where 
\[
d_j=\abs{u_{j-1}au_{j-2}a\cdots u_1aa}.
\]
It is easy to verify inductively that 
\begin{align}\label{plus}
	u_ia, u_ib\in\{u_{i-1}a,u_{i-1}b\}^+.
\end{align}
This implies that, for each $i\geq 1$, the word $\uu$ can be factorized as a product of words $u_ia$ and $u_ib$. That is,
\[
\uu=u_ic_{1}u_ic_{2}u_ic_{3}\cdots,
\]
where $c_{j}$, $j\geq 1$, are letters. 
Therefore each factor $z$ whose first occurence lies within $u_i$ has the maximal return time bounded by $|u_i|+1$, and $\uu$ is uniformly recurrent.

We show that $\uu$ contains only occurrences of $u_i$, $i\geq 1$, visible in the above factorization. More precisely, if $k$ is an occurrence of $u_i$, then $k$ is a multiple of $|u_i|+1$. For $u_1=ab^{n_1}$, the property is easily verified. Proceed by induction.
Let $k$ be an occurrence  of $u_i$ in $\uu$ with $i\geq 2$.  The induction assumption and \eqref{plus} implies that 
\[k=k'\cdot (|u_i|+1)+\ell\cdot (|u_{i-1}|+1)\]
for some $k'\geq 0$ and $0\leq \ell\leq n_{i-1}+1$.   
The word $u_ic$, where  $c\in\{a,b\}$, contains at most two occurrences of $u_{i-1}a$, and only one of them is followed by $u_{i-1}b$, the one for which $\ell$ is zero. This proofs the claim since $u_{i-1}au_{i-1}b$ is a prefix of $u_i$.

We are ready to prove \eqref{big}. For each $j\geq 1$, denote 
\[s_j:=u_{j-1}au_{j-2}a\cdots u_1aa,\]
which is a prefix of $\uu$.
Let $r_j$ be the shortest repetition word at the position $d_j$. Observe that $d_1=1$ and $r_1=b^{n_1}a$, thus \eqref{big} holds for $j=1$.
We proceed by induction and consider two possibilities for $r_{j+1}$. \\
1. If $|r_{j+1}|<|s_j|+|u_{j-1}|+1$, then $r_{j+1}$ is a prefix of \[s_{j}^{-1}u_{j-1}a(u_{j-1}b)^2u_{j-1}.\] This implies that it is a repetition word at the position $d_j$ as well, which is a contradiction with the induction assumption.
\[
\begin{tikzpicture}
\def\lj{6} 
\draw[*-|] (0,0)--(4.5,0);\draw[-*] (4.5,0) -- (\lj,0);
\draw (10.5,0)-- (1.8*\lj,0);
\draw[dashed](1.8*\lj,0) -- (2*\lj,0);
\node at (0.5*\lj,-0.3){\small{$u_ja$}};
\node at (1.5*\lj,-0.3){\small{$u_jb\cdots$}};
\node at (5.25,0.2){\tiny{$u_{j-1}a$}};
\draw[-|] (\lj,0)--(7.5,0);
\node at (6.75,0.2){\tiny{$u_{j-1}a$}};
\node[draw,circle,inner sep=1pt] (d) at (8,0){};
\draw[-] (7.5,0)--(d);\draw[-|] (d)--(9,0);
\node at (8.25,0.2){\tiny{$u_{j-1}b$}};

\draw[-|] (9,0)--(10.5,0);
\node at (9.75,0.2){\tiny{$u_{j-1}b$}};
\node at (11.25,0.2){\tiny{$u_{j-1}\cdots$}};
%#############
\draw[rounded corners] (0.1,-0.3)|- (8,-0.8)--(8,-0.3);
\node at (4,-0.6) {\scriptsize{$s_{j+1}$}};
%#############
\draw[rounded corners] (5.9,-0.2)|- (8,-0.5)--(8,-0.2);
\node at (7,-0.3) {\scriptsize{$s_{j}$}};
%%%%
\draw (5,0.5) .. controls (5.5,1) and (7.5,1) .. node[fill=white]{\tiny{$r_{j+1}$}}  (8,0.5);
%%%%
\draw (8,0.5) .. controls (8.5,1) and (10.5,1) .. node[fill=white]{\tiny{$r_{j+1}$}} (11,0.5);
\end{tikzpicture}
\]

\noindent 2. Let now $|r_{j+1}|\geq |s_{j}|+|u_{j-1}|+1$. Then $(u_{j-1}a)^2$ is a factor of $r_{j+1}$. One can verify that the first occurrence of $(u_{j-1}a)^2$ greater or equal to $d_{j+1}$ is in the second occurrence of $s_{j+1}$, the latter being $(n_{j+1}+1)(|u_{j}|+1)$.
Therefore the word \[s_{j+1}^{-1}u_{j}a(u_{j}b)^{n_{j+1}}s_{j+1}\] of length predicted by \eqref{big} is the shortest repetition word at the position $d_{j+1}$.

To conclude the proof of the theorem, it is enough to define $n_j>2f(d_j)$ since then we have, for each $j\geq 1$,
\[
h_\uu(d_j)> \frac1{d_j}p_\uu(d_j)> f(d_j)\frac{2(|u_{j-1}|+1)}{d_j}>f(d_j).
\]
\end{proof}

We can also give an explicit formula for the $i$th letter  of the word $\uu$ constructed in the previous proof. Let $\uu=a_1a_2a_3\dots $, where $a_i\in \{a,b\}$, and let $m_0:=1$ and $m_j:=n_j+2$ for $j\geq 1$. Note that for each $j\geq 0$ we have $|u_j|+1=m_0m_1m_2\cdots m_j$. Then $a_i=a$ if and only if there is some $j\geq 0$ such that
\[
i\equiv m_0m_1m_2\cdots m_j \mod m_0m_1m_2\cdots m_j m_{j+1}. 
\]

Note that the word $\uu$ can be also obtained by the Toeplitz construction (see \cite{toeplitz,koskas}). Consider the alphabet $\{a,b,?\}$ and let $T(\ww,\sv)$, where $\ww$ and $\sv$ are infinite words over $\{a,b,?\}$, denote the infinite word obtained from $\ww$ by replacing
the sequence of all occurrences of  $?$ by $\sv$. Then we can define
\begin{align*}
\uu_0&=?^\omega, \\
\uu_{i}&=T(\uu_{i-1},(ab^{n_i}?)^\omega),\quad i\geq 1,
\end{align*} 
and \[\uu=\lim_{i\to \infty} \uu_i.\]

It is a task of further research to establish the level of control over the periodicity complexity function of resulting words given by the choice of the sequence  $(n_i)$ in this construction.

\bibliography{periodiccomplexity}{}
\bibliographystyle{plain}

\end{document}